\documentclass[11pt]{article}

\usepackage{times}
\usepackage{natbib}
\usepackage[margin=1in]{geometry}

\usepackage{enumerate}
\usepackage{graphicx}
\usepackage{microtype}
\usepackage{amsfonts,amsmath,amssymb,amsthm}
\usepackage{algorithm}
\usepackage{algpseudocode}
\usepackage{thmtools}
\usepackage{thm-restate}
\usepackage{hyperref}
\usepackage{cleveref}
\usepackage{booktabs}

\newtheorem{theorem}{Theorem}[section]
\newtheorem{lemma}[theorem]{Lemma}

\newtheorem{definition}[theorem]{Definition}

\theoremstyle{remark}

\newcommand{\EMD}{\mathrm{EMD}}

\newcommand{\dist}{\mathrm{dist}}

\newcommand{\Wone}{W_1^T}

\newcommand{\here}{\mathsf{here}}
\newcommand{\boundary}{\partial}

\usepackage[colorinlistoftodos,textsize=small,color=red!25!white,obeyFinal]{todonotes}

\title{Tree Search With Distributional Predictions}

\author{
Michael Dinitz \thanks{Funded in part by NSF award 2228995.}\\
Department of Computer Science\\
Johns Hopkins University\\
Baltimore, MD 21218\\
\texttt{mdinitz@cs.jhu.edu}
\and
Bob Dong \\
Department of Computer Science\\
Johns Hopkins University\\
Baltimore, MD 21218\\
\texttt{bdong9@jh.edu}
}

\date{}

\begin{document}

\maketitle

\begin{abstract}
Learning-augmented algorithms use machine-learned predictions to improve classical algorithmic guarantees when the predictions are accurate, while retaining rigorous performance guarantees when they are not. We study this paradigm for search on trees. Given a tree $T$ containing an unknown target vertex $t$, an algorithm may query any vertex $v$ and learn which neighbor of $v$ lies on the unique path from $v$ to $t$. The goal is to find $t$ using as few queries as possible.

We consider the distributional setting, in which the target is drawn from an unknown distribution $p$ and the algorithm is given a predicted distribution $\widehat p$ of unknown quality. We give an algorithm with expected query complexity $O\left(H(p)+k\log \eta \right)$, where $H(p)$ is the Shannon entropy of the true distribution and $\eta$ is the earth mover's distance between $p$ and $\widehat p$ in the tree metric. We also provide a matching lower bound that shows our algorithm is asymptotically tight.

Finally, experiments on real-world and synthetic trees show that our prediction-based algorithm can use substantially fewer queries than a simple baseline that trusts the prediction completely.

\end{abstract}

\section{Introduction} \label{sec:introduction}
Learning-augmented algorithms, also known as algorithms with predictions, aim to combine the rigorous guarantees of classical algorithms with predictions that may be obtained from historical data or machine-learning systems. Rather than assuming that a prediction is always correct, the goal is to design algorithms whose performance improves when the prediction is accurate, degrades smoothly as the prediction becomes less accurate, and still admits meaningful guarantees when the prediction is poor.

The simplest example is searching in a sorted array.  Without any
additional information, binary search requires $O(\log n)$ queries. Suppose
instead that we are given a predicted location $\widehat{\alpha}(t)$ for the
true location $\alpha(t)$ of the target. If
$
    \eta = |\alpha(t)-\widehat{\alpha}(t)|
$
is the prediction error, then the classical ``doubling binary search'' 
algorithm by \cite{BENTLEY197682} uses only $O(\log \eta)$ queries. Thus an accurate prediction can lead to a substantial improvement, while the query complexity degrades smoothly as the prediction becomes less accurate. We also note that the error term is bounded above by $n$, so this algorithm is never asymptotically worse than the binary search algorithm that simply ignores the prediction

It is easy to see that a sorted array can equivalently be viewed as a path, where querying a vertex reveals the direction of the target. This viewpoint naturally suggests a generalization from paths to trees, leading to the \emph{Search on Trees} problem: there is an unknown target vertex $t$ in a tree $T$, and querying a vertex $v$ reveals the unique neighbor of $v$ lying on the path from $v$ to $t$. Recent work by \citet{dinitz2026treesearchpredictions} studied the point-prediction version of this
problem, where the algorithm is given a predicted vertex $s$. They showed that the
$O(\log \dist(s,t))$ guarantee for paths cannot be obtained on arbitrary trees. However, for a tree of pathwidth $k$, they gave an algorithm with query complexity
$O(k\log \dist(s,t))$. They also constructed a family of hard instances showing that this dependence is asymptotically tight.

However, point predictions are often less natural in machine learning settings. In many applications, our prediction is either generated by looking at the empirical distribution of previous (or related) instances, or is the (stochastic) output of some ML system.  So in many settings, it is natural to consider a \emph{distribution} as a prediction.  This generalizes the point prediction setting, since the predicted distribution could always be a point distribution.  But can we use these more general predictions?  Motivated by this question, \cite{DILMNV24} studied the Search on Paths (sorted arrays) problem with distributional prediction, and showed that distributional predictions really can be more powerful than point predictions. In this paper we study a significant generalization of their work, namely, the distributional prediction version of Search on Trees.

\begin{definition}\label{def:distributional-search-on-trees}
In the \emph{Distributional Search on Trees} problem, we are given a tree $T=(V,E)$ and there is an unknown target node $t \in V$ drawn from an unknown distribution $p$ over all the vertices. In the prediction setting we are also given a predicted distribution $\widehat{p}$ of \emph{unknown} quality over all the vertices. We can query any node $x \in V$, and that query will return the unique neighbor of $x$ that is on the path from $x$ to $t$. The goal is to find (query) $t$ using as few queries as possible in expectation. 
\end{definition}

It turns out that if $p= \widehat p$ then it is already known that the \emph{weighted} centroid algorithm (the natural generalization of the centroid algorithm where we weight by the distribution) is a $2$-approximation of the instance-optimal search strategy~\citep{BGKK23}.  But in the algorithms with predictions framework we do not want to assume that $p = \widehat p$, and do not even want to assume that we know how ``close'' $p$ and $\widehat p$ are.  So our goal is to design an algorithm for Problem~\ref{def:distributional-search-on-trees} which has expected query complexity which is small when $p$ is close to $\widehat p$ (often called \emph{consistency}), degrades smoothly as $\widehat p$ and $p$ diverge (often called \emph{smoothness}), but never gets too large (often called \emph{robustness}).

\section{Our Results} \label{sec:result}
Before we present the actual algorithm, we summarize what we have already known.

\begin{enumerate}
    \item If the predicted distribution $\widehat{p}$ is accurate, then the weighted centroid algorithm, as discussed in Section~\ref{sec:introduction}, is \emph{asymptotically} optimal. 
    \item If the prediction is just a point, we can run the point prediction algorithm by \cite{dinitz2026treesearchpredictions}.
\end{enumerate}

Of course, both cases on their own seem somewhat lacking for Problem~\ref{def:distributional-search-on-trees}. We are always given a predicted distribution $\widehat{p}$ of \emph{unknown} quality, so running weighted centroid might not actually land us to a node that is relevant. And we certainly should not expect $p$ to be concentrated just on a point.

But what if we combine the two? Inspired by \cite{DILMNV24}, our algorithm (informally) does the following.  In iteration $i$, we perform $2^i$ weighted centroid queries to quickly eliminate regions with large predicted mass, and then run the point prediction algorithm by \cite{dinitz2026treesearchpredictions} for $2^i$ aueries from each of the ``boundary vertices'' (defined more formally in Section~\ref{sec:prelims}) of the remaining feasible subtrees. In this way, our algorithm can exploit $\widehat p$ when it is accurate while still recovering efficiently when the target lies far from where $\widehat p$ predicts.

One important subtlety here, which does not appear in the path case of~\citet{DILMNV24}, is that there can be many boundary vertices.  The natural notion of a ``boundary vertex'' for a subtree would be ``nodes with at least one neighbor outside of the subtree''.  But it is easy to see that after a sequence of weighted-centroid queries, the remaining feasible subtree may have many such ``boundary vertices''.  So naively running our point-prediction algorithm from every one of them would be prohibitively expensive.  This is quite different from the path case, where there can only be two such boundary nodes.    

To get around this obstacle, we show that it is actually possible to ``compress'' these many boundary vertices to only two candidate anchors, without increasing their distance to the target.  After this compression we can simply run our bounded-pathwidth point-prediction algorithm from the two remaining anchors.

To get some high-level idea of how we compress the boundary vertices, we assign unit weight to each boundary vertex and zero weight to every other vertex of the feasible subtree, and query a weighted centroid. If the answer is $\here$, we are done. Otherwise, we update the feasible subtree to be the component indicated by the answer, and update the boundary vertices to be those in the new feasible subtree plus the neighbor returned by the oracle. By the centroid property, this reduces the number of boundary vertices by a constant factor, so after logarithmically many queries we have at most two boundary vertices remaining. These two vertices are our candidate anchors, and we can run our point-prediction algorithm from them. Putting this all together gives the following theorem, which is our main upper bound.

\begin{restatable}{theorem}{DistributionalUpperBound}\label{DistributionalUpperBound}
\label{thm}
There is a deterministic algorithm for the Distributional Search on Trees problem such that, for every tree $T$ of pathwidth at most $k$, every true distribution $p$, and every predicted distribution $\widehat p$, the algorithm finds a target drawn from $p$ using $O\left(H(p)+k\log \eta\right)$ queries in expectation, where $H(p)$ is the entropy of $p$ and $\eta$ is the earth mover's distance between $p$ and $\widehat p$. 
\end{restatable}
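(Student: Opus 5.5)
We analyze the doubling algorithm sketched above. In phase $i=0,1,2,\dots$ it does three things. First, it runs $2^i$ weighted-centroid queries with respect to $\widehat p$, restricted to the current feasible subtree $T_i$. Second, it compresses the boundary vertices of $T_i$ to at most two anchors using $O(\log |\boundary T_i|)$ unit-weight centroid queries. Third, it runs the pathwidth-$k$ point-prediction algorithm of \citet{dinitz2026treesearchpredictions} from each anchor, each run truncated at $C k 2^i$ queries. The cost of phase $i$ is $O(k 2^i)$ plus the compression cost. The compression cost needs its own bound: each weighted-centroid query adds at most one new boundary vertex, since the oracle answer is a single neighbor. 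So after $2^i$ queries $|\boundary T_i|\le 2^i+2$, and compression costs $O(i)$. Hence phase $i$ costs $O(k2^i)$, and the cumulative cost through phase $i$ is $O(k2^i)$. It then suffices to show that if $t$ is found in phase $i^*$, then $\mathbb E[2^{i^*}]=O(H(p)/k+\log\eta)$, or some bound of the same shape that yields the theorem after multiplying by $k$.

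\textbf{Bounding the first phase that succeeds.} For a fixed target $t$, the target is surely found by phase $i$ if either event below holds.

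(a) The weighted-centroid queries alone hit $t$. By the standard centroid halving argument, $\widehat p(T_j)\le 2^{-j}$ after $j$ queries. So once $2^i\ge \log(1/\widehat p(t))+O(1)$, either $t$ has been queried or $\widehat p(T_i)<\widehat p(t)$, which is a contradiction.

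(b) Some anchor $a$ satisfies $\dist(a,t)\le d$ with $k\log d\lesssim k2^i$. By the compression lemma, each anchor is no farther from $t$ than the nearest original boundary vertex of $T_i$. Each of those is at most as far as the point in $\mathrm{supp}(\widehat p)$ outside $T_i$ whose path to $t$ crosses the boundary. So the relevant distance is at most $D(t):=$ distance from $t$ to predicted mass that was cut away.

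Combining (a) and (b), $2^{i^*(t)}\lesssim \min\{\log(1/\widehat p(t)),\ \log D(t)\}$, and therefore
\[
\mathbb E_t\bigl[k2^{i^*}\bigr]\lesssim \mathbb E_t\min\bigl\{k\log\tfrac1{\widehat p(t)},\ k\log D(t)\bigr\}.
\]

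\textbf{Relating to $H(p)$ and $\eta$.} This is the main obstacle. The task is to show that the expectation above is $O(H(p)+k\log\eta)$, with $H(p)$ carrying no factor of $k$. That forces a finer accounting than the crude bound: the weighted-centroid part of each phase should be charged only $2^i$, not $k2^i$. Phase $i^*$ is reached only when the centroid part failed in phase $i^*-1$, so the centroid cost is $O(\log(1/\widehat p(t)))$. The point-prediction part of phase $i$ is charged $k2^i$ only for phases that actually occur.

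I would follow \citet{DILMNV24}. Fix an optimal transport plan between $p$ and $\widehat p$ and split targets into two classes. A target $t$ is \emph{good} if it is transported mostly to nearby $\widehat p$-mass, within distance $2\eta/p(t)$-type thresholds. Via a Markov argument it then has $\widehat p(t')\gtrsim p(t)$ for some $t'$ near $t$. I would use the neighborhood mass $\widehat p(B(t,r))$ instead of the point mass $\widehat p(t)$: the centroid queries reduce $\widehat p(T_i)$ below $\widehat p(B(t,r))$, after which $T_i$ cannot contain the whole ball, so a boundary vertex lies within $r$ of $t$. This gives $2^{i^*}\lesssim \log(1/\widehat p(B(t,r)))+\tfrac1k\cdot k\log r$. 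Then I would choose $r$ per target as $r(t)=\eta_t/p(t)$, where $\eta_t$ is the transport cost charged to $t$. Markov gives $\widehat p(B(t,r))\ge p(t)/2$. Summing yields $\sum_t p(t)\log\frac{2}{p(t)}+k\sum_t p(t)\log\frac{\eta_t}{p(t)}$. The last term splits as $k\log\eta$ by Jensen's inequality plus $k\,H(p)$. That extra $kH(p)$ is the part I expect to be hardest to remove. To remove it, I would try running the point-prediction searches with budget $2^i$ plus $k\log$ of the distance, so the term becomes $\log(1/p(t))+k\log\eta_t$ rather than $k\log(\eta_t/p(t))$. If needed, I would restrict to distances measured inside the shrinking subtree so that the $1/p(t)$ factor is absorbed by the centroid phase.
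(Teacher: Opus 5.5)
\paragraph{The gap is in the algorithm, not the accounting.} You give each recovery run a budget of $Ck2^i$, so phase $i$ spends $\Theta(k2^i)$ queries whether or not recovery helps. Take a target that is only located once the centroid part has pushed $\widehat p(T_i)$ below $p_t/2$, i.e.\ at $2^{i^*}\approx\log(1/p_t)$. Such a target has already paid $\Theta(k\log(1/p_t))$ for the failed recovery runs of the earlier phases. Your ``finer accounting'' (charging the point-prediction part $k2^i$ only for phases that occur) cannot remove a cost that was actually incurred. So your scheme, analyzed this way, yields only $O(k(H(p)+\log\eta))$.

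The missing idea, which is what the paper does, is to give the round-robin recovery the same budget $2^i$ as the centroid part. Then a phase costs $O(2^i)$, plus $O(i)$ for compression. The factor $k$ enters only through the success threshold: a run from an anchor within distance $r$ of $t$ succeeds once $2^i\gtrsim k\log r$. Hence the first successful phase satisfies $2^{i^*}=O(\log(4/p_t)+k\log r)$.

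\paragraph{The residual $kH(p)$ is an artifact.} The extra $kH(p)$ you attribute to $k\sum_t p_t\log(\eta_t/p_t)$ comes from splitting the logarithm. Apply Jensen with weights $p_t$ to the values $\eta_t/p_t$:
\[
\sum_t p_t\log(\eta_t/p_t)\le\log\sum_t\eta_t=\log\eta .
\]
So once the budget is fixed, your ball argument closes the proof as it stands:
\begin{itemize}
\item Take $r(t)=2\eta_t/p_t$. Markov gives $\widehat p(B(t,r))\ge p_t/2$.
\item Once $\widehat p(T_i)<p_t/2$, some boundary vertex, and hence by compression some anchor, lies within $r$ of $t$.
\item Summing over $t$ gives $O(H(p)+k\log\eta)$.
\end{itemize}
The speculative fixes in your last sentences are therefore unnecessary. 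A small correction along the way: compression guarantees that \emph{some} anchor, not each anchor, is no farther from $t$ than the nearest boundary vertex.

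Your ball argument is a Markov-style dual of the paper's charging step. The paper fixes no radius in advance. It takes $S_t$ to be the feasible subtree after the centroid part of stage $j-1$, which satisfies $\widehat p(S_t)\le p_t/2$, and bounds the cost by $O(k\log\delta_t)$ with $\delta_t=\dist_T(t,\partial S_t)$. At least $p_t/2$ of $t$'s mass must be shipped out of $S_t$ over distance at least $\delta_t$. This gives $\sum_t p_t\log\delta_t\le 2\sum_{x,y}\pi^*(x,y)\log\dist_T(x,y)\le 2\log\eta$.
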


\paragraph{Robustness and Running Time.}
Our focus is on query complexity, following~\citet{MitzenmacherVassilvitskii}, ~\citet{DILMNV24}, and ~\citet{dinitz2026treesearchpredictions}. Robustness can be obtained essentially for free by running our algorithm in parallel with the standard centroid-search algorithm and stopping when either one finds the target. This gives query complexity $O(\min\{\log n, H(p) + k \log \eta \})$.
We therefore do not discuss robustness separately in the remainder of the paper. As for computational efficiency, fortunately, \citet{dinitz2026treesearchpredictions} showed that the point-prediction algorithm and the required spine decompositions can be implemented in polynomial time. The additional weighted-centroid and boundary-compression operations used in our distributional algorithm are also polynomial-time, so our overall algorithm remains computationally efficient. Indeed, as we will show in Section~\ref{experiment}, the polynomial-time runtime of our algorithms enables us to run experiments on relatively large graphs.

\section{A Matching Lower Bound}
Given Theorem~\ref{DistributionalUpperBound}, a natural next question is
whether its dependence on entropy and prediction error is necessary. We
show that both terms are independently necessary. For the prediction-error
term, we reduce the point-prediction problem to the distributional setting
and invoke the lower bound of~\citet{dinitz2026treesearchpredictions}, which
shows that, for trees of pathwidth $k$, there are prediction--target pairs
$(s,t)$ for which every deterministic or zero-error randomized algorithm
requires $    \Omega\!\left(k\log \dist(s,t)\right)$
queries in expectation.

The lower-bound construction of~\citet{dinitz2026treesearchpredictions}
fixes a tree and a predicted vertex $s$, and considers a hard distribution
over possible targets in this tree. We associate each possible target $t$
with a separate distributional-search instance, whose true distribution
$p=\delta_t$ places unit mass at $t$ and whose predicted distribution
$\widehat p=\delta_s$ places unit mass at $s$. The original hard
distribution over targets therefore induces a distribution over these
instances. Within each individual instance, however, the target is
deterministic, so $H(p)=0$, and
$
    \eta=\EMD(\delta_t,\delta_s)=\dist(s,t).
$

Since the true distribution is unknown to the algorithm, and every
instance presents the same tree and predicted distribution, identifying
the target requires exactly the same oracle queries as in the
point-prediction setting. The original lower bound therefore carries
over: every deterministic or zero-error randomized algorithm has some
distributional-search instance on which it requires
$\Omega(k\log\eta)$ queries in expectation, even though the true
distribution in that instance has zero entropy.

The entropy term is also necessary, even when the prediction is exact.
Taking $\widehat p=p$ on a path recovers the classical setting of binary
search with a known target distribution, where the lower bound
of~\citet{Mehlhorn1975} gives expected query complexity $\Omega(H(p))$.
Thus, both the entropy term and the prediction-error term are
independently necessary, yielding the following theorem.

\begin{restatable}{theorem}{DistributionallowerBound}
\label{thm:DistributionalLowerBound}
For every $k\ge 1$, there exists an infinite family of instances of
\textsc{Distributional Search on Trees} on trees of pathwidth $k$ and
maximum degree at most $3$ such that every deterministic or zero-error
randomized algorithm has some instance in the family on which it requires
$
    \Omega\left(
        H(p)
        +
        k\log\left( \eta \right)
    \right)
$
queries in expectation, where
$
    \eta=\EMD(p,\widehat p)
$
is the earth mover's distance between the true distribution $p$ and the
predicted distribution $\widehat p$. Moreover, the instances witnessing
the prediction-error term can be chosen so that $H(p)=0$.
\end{restatable}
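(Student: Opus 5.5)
The plan is to prove the two terms separately and then combine them into one family. The combination step is simple. If a family $\mathcal F_1$ forces $\Omega(k\log\eta)$ and a family $\mathcal F_2$ forces $\Omega(H(p))$, then their union, restricted to pathwidth $k$ and degree at most $3$, forces $\Omega(\max\{H(p),k\log\eta\})=\Omega(H(p)+k\log\eta)$ on some instance. There is one subtlety: the family must force the \emph{sum} for one fixed algorithm. I read the statement as "for each algorithm there is an instance in the family where the cost is at least a constant times $H(p)+k\log\eta$ evaluated at that instance." Since $\max\ge\frac12(\text{sum})$, it suffices that on every instance of $\mathcal F_1$ we have $H(p)=0$, and on every instance of $\mathcal F_2$ we have $\eta$ bounded by a constant. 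Then the other term is negligible on each instance.

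\textbf{Prediction-error term.} I take the hard trees of \citet{dinitz2026treesearchpredictions}. They have pathwidth $k$ and maximum degree $3$, and come with a predicted vertex $s$ and a hard distribution $\mathcal D$ over targets $t$ with $\dist(s,t)=d$ for arbitrarily large $d$. For each $t$ I form the instance $(T,\,p=\delta_t,\,\widehat p=\delta_s)$. On this instance $H(p)=0$ and $\EMD(\delta_t,\delta_s)=\dist(s,t)$, because moving a unit mass along the tree path costs exactly the path length.

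Next I argue the reduction. Take any deterministic or zero-error randomized distributional algorithm $A$. Its input consists only of $T$ and $\widehat p=\delta_s$, and these are the same across all these instances. So $A$ is a point-prediction algorithm given $s$. Its oracle answers depend only on $t$, and it must terminate by querying $t$. Hence its expected cost on a target drawn from $\mathcal D$ is at least $\Omega(k\log d)$, by the cited lower bound applied via Yao's principle in the randomized case. An averaging argument then gives some single $t$ in the support with expected cost $\Omega(k\log \dist(s,t))=\Omega(k\log\eta)$. I need to make sure the cited bound holds with $\dist(s,t)$ essentially fixed, or at least at least $d^{\Omega(1)}$, across the support. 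Otherwise averaging only gives $\Omega(k\log d)$ for a target whose own $\eta$ might be larger. I would check that the construction has all targets at distance $\Theta(d)$, or restrict $\mathcal D$ to such targets.

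\textbf{Entropy term.} I use $\widehat p=p$, so $\eta=0$; with the convention $\log\eta$ replaced by $\log(\eta+2)$ this term is constant. I embed a path, or the pathwidth-$k$ degree-$3$ tree with $p$ supported on one long path inside it, and take $p$ uniform on $m$ vertices of that path, so $H(p)=\log m$. Every query, even at a vertex off the path, returns a neighbor, which is information of at most $O(1)$ bits given degree at most $3$. Each query therefore splits the candidate set into at most $3$ parts. By the standard ternary decision-tree or Kraft argument, the expected depth is $\Omega(H(p))$, which is Mehlhorn's bound generalized to constant branching.

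\textbf{Main obstacle.} I expect the main obstacle to be the bookkeeping in the prediction-error term. That means transferring an average-case lower bound over $\mathcal D$ to a single instance with a matching $\eta$, and checking that the randomized zero-error case survives. I would handle this by fixing $\dist(s,t)$ to within a constant factor on the support of $\mathcal D$.
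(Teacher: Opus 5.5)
Your proposal is correct and follows the paper's proof for the prediction-error term. The paper uses the same instances $p=\delta_t$, $\widehat p=\delta_s$ on the hard trees of \citet{dinitz2026treesearchpredictions}, simulates the point-prediction algorithm, fixes the random bits to handle the zero-error case, and averages to get a single $t$. The distance bookkeeping you flag is resolved there by the construction's guarantee $k\ell/2\le\dist(s,t)\le k\ell$, set against a cited bound of $ck\log\ell$, with $\ell\ge\max\{4,k^2\}$ chosen so that $\log\eta_t\le\frac32\log\ell$. For the entropy term the paper cites Mehlhorn's bound on a bare path with $\widehat p=p$. Your Kraft-type argument, with $p$ supported on a path inside a pathwidth-$k$, degree-$3$ tree, is a slightly more self-contained variant that also keeps those instances within the stated class of trees.
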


\section{Related Work} \label{sec:related}

\paragraph{Learning-Augmented Algorithms.}
Learning-augmented algorithms have received substantial attention as machine-learned predictions have become increasingly available in algorithmic applications. Following the influential work of \citet{lykouris2021competitive}, this framework has been explored across a wide range of settings, including online problems such as ski rental~\citep{Purohit}, scheduling~\citep{LattanziLMV, doi:10.1137/1.9781611978971.47}, knapsack~\citep{ImKQP21, lechowicz2024time, pmlr-v267-daneshvaramoli25a}, set cover~\citep{BamasMS20}, It has also been applied to accelerate classical combinatorial algorithms~\citep{DinitzILMV21}, design dynamic algorithms~\citep{BrandFNP24}, study mechanism-design problems~\citep{AgrawalBGOT24} and computational geometry \citep{10.1007/978-3-031-81396-2_3, cabello_et_al:LIPIcs.ITCS.2026.31}, among many other direction.

\paragraph{Algorithms with Distributional Predictions.}
Among the many forms of predictions considered in learning-augmented algorithms, distributional predictions have also received increasing attention. This model has been studied in several classical algorithmic settings, including binary search \citep{DILMNV24}, ski rental \citep{cui2026skirentaldistributionalpredictions, kang2025learningaugmentedskirentaldiscrete, kim2026robustconsistentskirental}, and contract scheduling \citep{angelopoulos2024contractschedulingdistributionalmultiple}.

\paragraph{Search in trees and graphs.}
There is also a substantial literature on search beyond paths in the absence of predictions. \citet{OP06} studied search on trees and showed that centroid search uses $O(\log n)$ queries. A related problem is ``Search Trees on Trees'' (STT), where a distribution over the vertices is known exactly and the goal is to optimize the resulting search strategy. In this setting, \citet{BGKK23} showed that weighted-centroid search gives a $2$-approximation to the optimum. Related search models have also been studied on more general graphs~\citep{EKS16}.
More recently, \citet{dinitz2026treesearchpredictions} studied Search on Trees with point predictions, where the algorithm is given a predicted target vertex. They showed that the $O(\log \dist(s,t))$ guarantee available on paths cannot hold on arbitrary trees, but that for trees of pathwidth $k$, a recursive spine-based algorithm achieves $O(k\log \dist(s,t))$ query complexity, with a matching asymptotic lower bound. The present work extends this line of study to distributional predictions, where the algorithm is given an arbitrary predicted distribution rather than a single predicted vertex.

\section{Preliminaries and Notation}
\label{sec:prelims}

Given a tree $T$, we denote its vertex set by $V(T)$ and its edge set by $E(T)$. For vertices $u,v\in V(T)$, let $\dist_T(u,v)$ denote the length of the unique simple path between $u$ and $v$ in $T$. If $P$ is a path in $T$ and $u,v\in V(P)$, then $\dist_P(u,v)$ denotes the number of edges on the subpath of $P$ between $u$ and $v$.

We first formalize the oracle model used throughout the paper.

\begin{definition}[Direction oracle]
\label{def:direction-oracle}
Let $T=(V(T),E(T))$ be a tree, and let $t\in V(T)$ be a hidden target vertex. The direction oracle for $t$ is the map
$
    \mathrm{dir}_t:V(T)\to V(T)\cup\{\textsf{here}\}
$
defined as follows. If $v=t$, then $\mathrm{dir}_t(v)=\textsf{here}$.  Otherwise, $\mathrm{dir}_t(v)$ is the unique neighbor $u$ of $v$ that lies on the simple path from $v$ to $t$ in $T$. An oracle query at $v$ returns $\mathrm{dir}_t(v)$.
\end{definition}

In the \emph{Distributional Tree Search} problem, the algorithm is given a predicted distribution $\widehat p\in\Delta(V(T))$, while the target is
drawn from an unknown true distribution $p\in\Delta(V(T))$.
 The algorithm may adaptively query vertices of $T$ through the direction oracle $\mathrm{dir}_t$. The algorithm succeeds when it outputs $t$; equivalently, it may stop once it queries a vertex $v$ with $\mathrm{dir}_t(v)=\textsf{here}$. The query complexity of an algorithm is the number of oracle queries it makes. When the target is clear from context, we write $\mathrm{dir}$ instead of $\mathrm{dir}_t$.

Now we recall the standard notion of earth mover's distance (optimal transport). 

\begin{definition}[Earth Mover's Distance]
    A coupling of $p$ and $\widehat p$ is a nonnegative matrix
$\pi\in\mathbb R_{\ge 0}^{V\times V}$ satisfying $\sum_{y\in V}\pi(x,y)=p_x$ and $\sum_{x\in V}\pi(x,y)=\widehat p_y.$ Let $\Pi(p,\widehat p)$ be the set of all such couplings. The earth mover's distance in the tree metric is $\Wone(p,\widehat p)
    :=
    \min_{\pi\in\Pi(p,\widehat p)}
    \sum_{x,y\in V}\pi(x,y)\dist_T(x,y)$.
\end{definition}

Now we consider the definition of boundary vertices of a subtree.

\begin{definition}
The boundary set of a subtree $F$ of $T$ is the set 
    $
\boundary F
:=
\{
x\in V(F):
x\text{ has a neighbor in }V(T)\setminus V(F)
\}.
$
\end{definition}

Finally, we will use the standard notion of distance from a node to a set of nodes as follow:

\begin{definition}
    For any $A\subseteq V(T)$ and node $t \in V(T)$, the distance from $t$ to $A$ is 
$
\dist_T(t,A)
:=
\min_{a\in A}\dist_T(t,a).
$
\end{definition}

\section{Main Algorithm and Analysis}\label{sec:dist-up-bound}
In this section we describe our \emph{distributional} tree search algorithm and prove it's query complexity.

The algorithm alternates two types of work. It first makes weighted-centroid queries using the entire predicted distribution $\widehat p$ on the current feasible subtree. These queries rapidly reduce the absolute predicted mass of the feasible subtree and are responsible for the entropy term. It then searches geometrically from the boundary of the feasible subtree. Since a subtree can have many boundary vertices, our algorithm therefore first need to know how to efficiently compress the boundary into at most $2$ nodes.

\subsection{The Boundary Compression Sub-routine}

Recall that a weighted centroid of a tree with nonnegative vertex weights is a vertex such that every component remaining after its deletion has at most half of the total weight. So to reduce the number of the boundary points, we could assign equal weights to all the boundary points and zero weight to all other nodes. Running weighted centroid on these nodes gives the desired result. 

\begin{lemma}[Boundary compression]\label{lem:boundary-compression}
Let $F$ be a connected subtree containing $t$, and let $A=\boundary F$.
There is an adaptive procedure that, using $O(\log(|A|))$ queries, either
finds $t$ or produces a connected subtree $F'\subseteq F$ containing $t$ and
a set $A'\subseteq V(F')$ such that
$
    1\le |A'|\le 2
    \qquad\text{and}\qquad
    \dist_T(t,A')\le \dist_T(t,A).
$
\end{lemma}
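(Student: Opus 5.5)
Throughout, we keep a pair $(F_j, A_j)$ with the invariant
\[
t\in V(F_j),\qquad A_j\subseteq V(F_j),\qquad \dist_T(t,A_j)\le \dist_T(t,A).
\]
We start from $(F_0,A_0)=(F,A)$. If $|A|\le 2$ we output $(F,A)$ immediately (if $A=\emptyset$ then $F=T$, and we may take $A'$ to be any single vertex of $F$ after one centroid query; this degenerate case is handled separately). Otherwise we run rounds that shrink $|A_j|$ by a constant factor per query, using the weighted-centroid idea described just before the statement.

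\textbf{One round.} Put unit weight on each vertex of $A_j$ and zero weight elsewhere in $F_j$. Let $c$ be a weighted centroid of $F_j$, so every component of $F_j-c$ contains at most $|A_j|/2$ vertices of $A_j$. Query $c$. If the answer is $\here$, we have found $t$. Otherwise the oracle returns a neighbor $u$ of $c$. Set $F_{j+1}$ to be the component of $F_j-c$ containing $u$; this component is connected and contains $t$. Set
\[
A_{j+1}=(A_j\cap V(F_{j+1}))\cup\{u\},
\]
so $|A_{j+1}|\le |A_j|/2+1$.

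\textbf{Preserving the invariant.} This is the key step. Let $a\in A_j$ attain $\dist_T(t,A_j)$. If $a\in V(F_{j+1})$, then $a\in A_{j+1}$ and we are done. Otherwise $a$ lies in $F_j$ but outside $F_{j+1}$. Since $F_j$ is a subtree, the path from $a$ to $t$ stays inside $F_j$. That path must enter $F_{j+1}$, and the only edge of $F_j$ joining $F_{j+1}$ to the rest of $F_j$ is $cu$. Hence the path passes through $u$, which gives $\dist_T(t,u)\le \dist_T(t,a)$. (The case $a=c$ is covered by the same argument.)

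\textbf{Termination and count.} The recurrence $|A_{j+1}|\le |A_j|/2+1$ strictly decreases $|A_j|$ whenever $|A_j|\ge 3$. Writing $m_j=|A_j|-2$ gives $m_{j+1}\le m_j/2$, so $m_j$ reaches $0$ after $O(\log|A|)$ rounds. At that point $|A_j|\le 2$, and $|A_j|\ge 1$ because $u$ is always included. We output $F'=F_{j}$ and $A'=A_j$.

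The main obstacle I expect is making the constant-factor shrinkage rigorous despite the additive $+1$ and the integrality near small sizes. In particular, at $|A_j|=3$ we get $\lfloor 3/2\rfloor+1=2$, which is fine, and the substitution $m_j=|A_j|-2$ is what turns the recurrence into a clean halving bound.
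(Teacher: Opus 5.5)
Your proposal is correct and follows the paper's proof essentially step for step. Both put unit weights on the current anchor set, query a weighted centroid, update $A\leftarrow (A\cap C)\cup\{u\}$, use the potential $|A|-2$ (which halves each round), and argue that the closest anchor's path to $t$ must pass through the returned neighbor $u$. Your explicit handling of the degenerate cases $|A|\le 2$ is a small but welcome addition: in particular, when $A=\emptyset$ the paper's loop would return an empty anchor set, violating $1\le|A'|$.
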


\begin{proof}
While $|A|>2$, assign unit weight to every vertex of $A$ and zero weight to
every other vertex of $F$, and query a weighted centroid $v$ of $F$. If
$v=t$, stop. Otherwise let $y$ be the returned neighbor, and let $C$ be the
component of $F-v$ containing $y$ and hence containing $t$. Update
$
    F\leftarrow C,
    A\leftarrow (A\cap C)\cup\{y\}.
$

If $m$ is the old size of $A$, the centroid property gives
$
    |A_{\mathrm{new}}|\le \frac{m}{2}+1,
$
and therefore
$
    |A_{\mathrm{new}}|-2
    \le
    \frac{|A_{\mathrm{old}}|-2}{2}.
$
Hence $O(\log(|A|))$ queries suffice to reduce $A$ to at most two vertices.

It remains to prove the distance invariant. Let $a\in A$ be closest to $t$.
If $a\in C$, then it remains in the new anchor set. Otherwise, the unique
$a$--$t$ path passes through $v$ and then through $y$, so $y$ is closer to
$t$ than $a$. Thus the distance from $t$ to the anchor set never increases.
\end{proof}

\subsection{The Distributional Algorithm}

Now we give the pseudocode for our distributional tree search algorithm with predictions. For stage $i = 0, 1, 2, ...$ we do $2^i$ iterations of weighted centroid on the remaining feasible subtree. Then we do the  boundary reduction and run $2^i$ steps of the point prediction algorithm. Increasing the step size geometrically, the algorithm stops until it finds the target. 

See Algorithm~\ref{alg:main} for full detail.

\begin{algorithm}[t]
\caption{Distributional Tree Search}
\label{alg:main}
\begin{algorithmic}[1]
\State $F\leftarrow T$
\For{stage $i=0,1,2,\ldots$}
    \For{$2^i$ iterations}
        \If{$F$ is a singleton}
            \State \Return its unique vertex
        \EndIf

        \State Choose a $\widehat p$-weighted centroid (or centroid if no mass is left) $v$ of $F$ and query $v$

        \If{the answer is $\here$}
            \State \Return $v$
        \EndIf

        \State Replace $F$ by the component of $F-v$ indicated by the answer
    \EndFor

    \State $A\leftarrow \boundary F$
    \State Run the compression procedure of
    \Cref{lem:boundary-compression}, updating $F$ and $A$, until $|A|\le 2$

    \State Let $U\leftarrow F$ and independently initialize the point-prediction algorithm on $(U,a)$ for every $a\in A$

    \State Execute these at most two procedures in round-robin order for a
    total of $2^i$ queries

    \If{one of the procedures finds the target}
        \State \Return the target
    \EndIf

    \State Replace $F$ by the subtree of $U$ consistent with all answers
    obtained during the recovery phase
\EndFor
\end{algorithmic}
\end{algorithm}

\subsection{Analysis}
Now we analyze Algorithm~\ref{alg:main}. We first note that the geometric increase in step size ``absorbs'' all the cost of the algorithm.

\begin{lemma}[Geometric stage cost]\label{lem:stage-cost}
Let $j \in \{0, 1, 2, ...\}$ be the first stage in which \Cref{alg:main} finds target $t$.
Then the algorithm finds $t$ using $O(2^j)$ queries.
\end{lemma}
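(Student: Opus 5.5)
The plan is to bound the number of queries made in each stage $i\le j$ by $O(2^i)$ and then sum the geometric series: $\sum_{i=0}^{j} O(2^i)=O(2^j)$. Stage $i$ has three parts. The weighted-centroid phase makes at most $2^i$ queries by construction, and so does the recovery phase. So the only part whose cost is not explicitly capped is the boundary compression, which by \Cref{lem:boundary-compression} costs $O(\log|A|)$ with $A=\boundary F$. It therefore suffices to show that at the moment compression begins in stage $i$, we have $|\boundary F| = O(2^i)$. Then compression costs $O(i+1)=O(2^i)$ queries.

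The key tool is a simple structural fact. Suppose $F$ is a subtree containing $t$, we query $v\in V(F)$, the oracle returns $y$, and $C$ is the component of $F-v$ containing $y$. Then
\[
\boundary C\subseteq (\boundary F\cap C)\cup\{y\}.
\]
To see this, take $x\in C$ with a neighbor $w\notin C$. Either $w\notin F$, in which case $x\in\boundary F$, or $w=v$. In the second case $x$ is the unique neighbor of $v$ lying in $C$, namely $y$. Consequently, every query that shrinks the feasible subtree to the component indicated by the answer increases $|\boundary F|$ by at most one.

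I will apply this fact in three places. First, in the centroid phase of stage $i$, each of the $2^i$ queries adds at most one boundary vertex. Second, in the compression procedure the update is $A\leftarrow(A\cap C)\cup\{y\}$. By the fact above, the invariant $\boundary F\subseteq A$ is preserved. Since $A=\boundary F$ initially, at the end of compression $|\boundary F|\le|A|\le 2$. Third, in the recovery phase all queries are made inside $U$, and the final $F$ is obtained by successively restricting to the component consistent with each answer. So the $2^{i}$ recovery queries add at most $2^{i}$ boundary vertices, and stage $i$ ends with $|\boundary F|\le 2+2^{i}$.

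By induction, at the start of stage $i\ge 1$ we have $|\boundary F|\le 2+2^{i-1}$, and $|\boundary F|=0$ at stage $0$. After the centroid phase, $|\boundary F|\le 2+2^{i-1}+2^i=O(2^i)$. Hence compression in stage $i$ uses $O(\log 2^{i+1})=O(i+1)$ queries, and the total cost of stage $i$ is $O(2^i)$. The stage $j$ in which the target is found may terminate early, which only lowers its cost. Summing over $i\le j$ gives $O(2^j)$.

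The main obstacle is making sure that the boundary cannot blow up across stages. This requires verifying that the recovery phase's ``subtree of $U$ consistent with all answers'' is exactly an iterated restriction by component. In particular, the round-robin interleaving of two point-prediction runs must still correspond to one query, one cut, and at most one new boundary vertex per query. Once the boundary-growth fact is established, the remaining steps are routine bookkeeping.
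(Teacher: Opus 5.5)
Your proposal is correct and follows essentially the same route as the paper. Each query, whether in the centroid, compression, or recovery phase, adds at most one vertex to $\partial F$, so $|\partial F| = O(2^i)$ when compression starts in stage $i$; compression then costs $O(i)$ queries, and the per-stage costs $O(2^i)$ sum geometrically to $O(2^j)$. The only difference is bookkeeping: the paper bounds $|\partial F|$ by the total number of preceding queries, which needs an induction on the cumulative count. You instead use the invariant $\partial F\subseteq A$ during compression to reset $|\partial F|\le 2$ in every stage, which gives the cleaner local bound $|\partial F|\le 2+2^{i-1}+2^i$.
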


\begin{proof}
A query made at a vertex of the current feasible subtree can add at most one
new vertex to its inner boundary, while a query outside the feasible subtree
cannot increase the boundary. Hence the boundary size is at most the number
of preceding queries.

Suppose inductively that $O(2^i)$ queries have been made before stage $i$.
The bisection part and the truncated point-recovery part each use $2^i$
queries. The boundary has size $O(2^i)$, so the boundary reduction step of 
\Cref{lem:boundary-compression} uses $O(i)$ queries. Therefore stage $i$ costs $O(2^i)$, and summing the geometric series through stage $j$ shows that the algorithm finds $t$ using
$O(2^j)$ queries.
\end{proof}

We next connect the running time of the algorithm to either the entropy term or the prediction error. Intuitively, the algorithm can only spend many queries on a target \(t\) if the predicted distribution is substantially inaccurate around \(t\). If the algorithm reaches a sufficiently large stage, then the repeated weighted-centroid queries have already reduced the predicted mass of the feasible subtree containing \(t\) to below \(p_t/2\). Moreover, since the recovery phase of the preceding stage still failed, the point-prediction guarantee implies that \(t\) must lie sufficiently far from the boundary of this subtree. Thus either \(t\) is found after only \(O(\log(1/p_t))\) queries, or we obtain a subtree containing \(t\) that has both too little predicted mass and a sufficiently large distance from \(t\) to its boundary. These properties will allow us to charge the search cost to the earth mover's distance.

\begin{restatable}{lemma}{DichotomyLemma} \label{lem:dichotomy}
For every target $t$ with $p_t>0$, at least one of the following holds:
\begin{enumerate}[(i)]
    \item the algorithm finds $t$ using
    $
        O\left(\log\frac{4}{p_t}\right)
    $
    queries;
    \item there is a connected subtree $S_t\subseteq T$ containing $t$ such
    that
    $
        \widehat p(S_t)\le \frac{p_t}{2}
    $
    and, writing
    $
        \delta_t:=\dist_T(t,\boundary S_t),
    $
    the algorithm finds $t$ using
    $
        O\!\left(k\log \delta_t\right)
    $
    queries.
\end{enumerate}
\end{restatable}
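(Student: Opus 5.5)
Fix a target $t$ with $p_t>0$ and let $i^\ast$ be the smallest stage index such that, at the end of the bisection phase of stage $i^\ast$, the feasible subtree $F$ satisfies $\widehat p(F)\le p_t/2$ (or $t$ has already been found). The first step is to bound $i^\ast$. Each weighted-centroid query either returns $\here$ or moves to a component of $F-v$ whose $\widehat p$-mass is at most half the previous $\widehat p(F)$. The compression and recovery phases only shrink $F$, so they never increase $\widehat p(F)$. Hence after $m$ bisection queries in total, $\widehat p(F)\le 2^{-m}$. The total number of bisection queries made through stage $i$ is $2^{i+1}-1$, so $\widehat p(F)\le p_t/2$ as soon as $2^{i+1}-1\ge \log_2(2/p_t)$. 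This gives $2^{i^\ast}=O(\log(4/p_t))$. One edge case: once $\widehat p$-mass is zero the algorithm switches to plain centroids, but then $\widehat p(F)=0\le p_t/2$ already holds, so this is harmless.

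Now split into cases. If $t$ is found at some stage $j\le i^\ast+1$, then by \Cref{lem:stage-cost} the cost is $O(2^j)=O(2^{i^\ast})=O(\log(4/p_t))$, which is case (i). Otherwise $t$ survives through stage $i^\ast+1$. Let $S_t$ be the feasible subtree $F$ right after the bisection phase of stage $i^\ast$. It is connected, contains $t$, and has $\widehat p(S_t)\le p_t/2$, so it is the candidate for case (ii). The remaining task is to show that the stage $j$ at which $t$ is eventually found satisfies $2^j=O(k\log\delta_t)$ (or $O(\log(4/p_t))$). \Cref{lem:stage-cost} then finishes the proof.

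For this, look at any later stage $i\ge i^\ast$. At the start of its recovery phase, the feasible subtree $F\subseteq S_t$ has boundary $A=\boundary F$. The key claim is $\dist_T(t,\boundary F)\le \dist_T(t,\boundary S_t)=\delta_t$. The path from $t$ to the nearest vertex $b$ of $\boundary S_t$ stays inside $S_t$; if it leaves $F$, it must exit through a vertex of $\boundary F$ closer to $t$, and if not, $b$ itself is in $\boundary F$, since its outside neighbor lies outside $S_t\supseteq F$. By \Cref{lem:boundary-compression}, the compressed anchors $A'$ satisfy $\dist_T(t,A')\le\delta_t$. So some anchor $a$ has $\dist(a,t)\le\delta_t$. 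The point-prediction algorithm of \citet{dinitz2026treesearchpredictions} run from $a$ on $U$ therefore finds $t$ within $Ck\log\delta_t$ queries, where $U$ has pathwidth at most $k$ because it is a subtree of $T$. With round robin, at most $2Ck\log\delta_t$ queries suffice, so the first stage $i\ge i^\ast$ with $2^i\ge 2Ck\log\delta_t$ succeeds. Thus $2^j=O(\max(2^{i^\ast},k\log\delta_t))$. If the max is $2^{i^\ast}$ we are in case (i); otherwise we are in case (ii). If $\delta_t\le 1$, the same argument gives $O(k)$ or constant cost, to be absorbed by reading $\log\delta_t$ as $\max(1,\log\delta_t)$.

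The main obstacle is the monotonicity claim $\dist_T(t,\boundary F)\le\delta_t$ across stages. The point-prediction guarantee is stated for a fresh run from one start vertex on a fixed tree. Because the algorithm restarts with truncated budgets, I must make sure each fresh restart on the current $F$ still satisfies that guarantee. This holds because each restart is a full fresh run with budget $2^i$, and $F$ is a subtree of $T$ with pathwidth at most $k$. I would verify this carefully, along with the fact that queries outside $U$ are never needed.
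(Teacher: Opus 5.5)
Your proof is correct, but it argues in the opposite direction from the paper.

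\textbf{The paper's route.} The paper works backward from the success stage $j$. It splits on whether $2^j\le\log\frac{4}{p_t}$. In the other case it takes $S_t$ to be the feasible subtree right after the bisection phase of stage $j-1$, where the $2^j-1$ halvings already give $\widehat p(S_t)<p_t/2$. At that moment the compression phase starts from exactly $A=\partial S_t$, so the compression lemma applies directly. The \emph{failure} of the stage-$(j-1)$ recovery phase then gives $2^{j-2}-1<Q(a,t)=O(k\log\delta_t)$, hence $2^j=O(k\log\delta_t)$.

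\textbf{Your route.} You fix $S_t$ once, at the first stage $i^\ast$ where the predicted mass drops below $p_t/2$. You then argue forward that every later recovery phase has an anchor within $\delta_t$. So the first stage with budget $2^i\ge 2Ck\log\delta_t$ must succeed, and you split on $2^{i^\ast}$ versus $k\log\delta_t$.

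\textbf{What each buys.} Your route needs an extra monotonicity fact: $\dist_T(t,\partial F)\le\dist_T(t,\partial S_t)$ for every connected $F$ with $t\in F\subseteq S_t$. Your exit-vertex proof of it is fine, and $\partial S_t\neq\emptyset$ because $\widehat p(S_t)\le p_t/2<1$ forces $S_t\neq T$. The paper avoids this fact by only ever looking at the single stage $j-1$. In exchange, your $S_t$ does not depend on $j$; it is the earliest, and hence largest, valid choice. Your success argument is also direct rather than by contradiction.

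Your write-up also makes explicit several points the paper leaves implicit:
\begin{enumerate}[(a)]
    \item the zero-mass centroid fallback;
    \item that $U$ has pathwidth at most $k$;
    \item that each recovery phase is a fresh run of the point-prediction algorithm;
    \item the convention that $\log\delta_t$ is read as $\max(1,\log\delta_t)$.
\end{enumerate}

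Either choice of $S_t$ suffices downstream, since \Cref{lem:transport-charge} uses only $\widehat p(S_t)\le p_t/2$ and the definition of $\delta_t$.
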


\begin{proof}

Let $j$ be the stage where $t$ is found. If $2^j\le \log\frac{4}{p_t}$, then part~(i) follows immediately from \Cref{lem:stage-cost}.

Assume instead that $2^j>\log\frac{4}{p_t}$, which implies $2^{-2^j} < \frac{p_t}{4}$. Let $S_t$ be the feasible subtree immediately after
the weighted centroid part of stage $j-1$. Through this point the algorithm has made
$
    1+2+\cdots+2^{j-1}=2^j-1
$
weighted-centroid queries. Each such query halves the absolute predicted
mass of the feasible subtree, so
$
    \widehat p(S_t)
    \le
    2^{-(2^j-1)}
    =
    2^{1-2^j}
    <
    \frac{p_t}{2}.
$

Let \(A'\) be the set of at most two anchors produced by boundary compression in stage \(j-1\). By \Cref{lem:boundary-compression}, there exists some \(a\in A'\) such that
$
    \dist_T(t,a)
    \le
    \dist_T(t,\boundary S_t)
    =
    \delta_t.
$

During the recovery phase of stage \(j-1\), the algorithm executes the point-prediction procedures initialized from the vertices of \(A'\) in round-robin order for a total of \(2^{j-1}\) queries. Since \(|A'|\le 2\), the procedure initialized from \(a\) receives at least
$
    2^{j-2}-1
$
queries. Because \(j\) is the first stage in which the target is found, this procedure does not terminate within this budget.

Let \(Q(a,t)\) denote the number of queries that the point prediction algorithm, initialized with prediction \(a\), requires to find \(t\). The preceding observation implies
$
    Q(a,t) > 2^{j-2}-1.
$
On the other hand, simply invoking the query complexity of the point prediction algorithm gives
$
    Q(a,t)
    =
    O\!\left(
        k\log \dist_T(a,t)
    \right)
    \le
    O\!\left(
        k\log\delta_t
    \right).
$
Consequently,
$
    2^j
    =
    O\!\left(
        k\log \delta_t
    \right).
$

Finally, \Cref{lem:stage-cost} implies that the total number of queries made before finding \(t\) is \(O(2^j)\), and hence
$
    O\!\left(
        k\log\frac{\delta_t}{k}
    \right).
$
This proves part~(ii).
\end{proof}

Now we show the second term of \Cref{lem:dichotomy} is actually \emph{on average} bounded by a earth mover's distance term between $p$ and $\widehat{p}$.

\begin{lemma}[Optimal Transport Cost]\label{lem:transport-charge}
Let $L$ be the set of targets satisfying part~(ii) of
\Cref{lem:dichotomy}, and let $\eta=\Wone(p,\widehat p)$. Then
$
    \sum_{t\in L}p_t k\log \delta_t
    \le
    2k\log\eta.
$
\end{lemma}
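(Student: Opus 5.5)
We will charge each target $t\in L$ to the transport cost the optimal coupling must pay to move mass out of $S_t$, and then apply Jensen's inequality to the concave function $\log$. Fix an optimal coupling $\pi\in\Pi(p,\widehat p)$ with $\sum_{x,y}\pi(x,y)\dist_T(x,y)=\eta$.

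\textbf{Step 1: each $t\in L$ forces transport cost.} By part~(ii) of \Cref{lem:dichotomy}, $t\in S_t$ and $\widehat p(S_t)\le p_t/2$. The coupling sends the mass $p_t$ sitting at $t$ to various $y$. Mass that lands inside $S_t$ is at most $\sum_{y\in S_t}\pi(t,y)\le \widehat p(S_t)\le p_t/2$. So at least $p_t/2$ of the mass at $t$ goes to vertices $y\notin S_t$. Any path from $t\in S_t$ to $y\notin S_t$ passes through a vertex of $\boundary S_t$ and then leaves $S_t$, so $\dist_T(t,y)\ge \dist_T(t,\boundary S_t)=\delta_t$. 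Defining the per-source cost $c_t:=\sum_y\pi(t,y)\dist_T(t,y)$, we get $c_t\ge \tfrac{p_t}{2}\delta_t$, and hence $\sum_{t\in L} p_t\delta_t\le 2\sum_t c_t\le 2\eta$. This step needs only the row-marginal property of $\pi$ and the column bound on $S_t$, since distinct $t$ have disjoint rows. That is why no disjointness of the sets $S_t$ is needed.

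\textbf{Step 2: pass to logarithms.} Write $P=p(L)\le 1$. By concavity of $\log$ (Jensen with weights $p_t/P$),
\[
\sum_{t\in L}p_t\log\delta_t\le P\log\Bigl(\frac{1}{P}\sum_{t\in L}p_t\delta_t\Bigr)\le P\log\frac{2\eta}{P}.
\]
Since $P\log(1/P)\le 1/e$ and $P\le1$, this is at most $\log\eta+O(1)$, provided $\eta\ge1$. Multiplying by $k$ gives $k\log\eta+O(k)$, which is at most $2k\log\eta$ once $\eta\ge 2$.

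\textbf{Main obstacle: degenerate and small-$\eta$ regimes.} The inequality as literally stated fails when $\eta$ is tiny: for $\eta<1$ the right-hand side is negative, while each $\delta_t\ge1$ or $\log\delta_t$ is conventionally clamped at $0$. I would handle this by adopting the paper's implicit convention that $\log x$ means $\log(x+2)$ (or $\max(1,\log x)$) in all $O(\cdot)$ bounds, consistent with the point-prediction bound $O(k\log\dist(s,t))$ being meaningful at distance $0$ or $1$. Under that convention, $x\mapsto\log(x+2)$ is still concave, so Jensen applies unchanged and the additive constants are absorbed into the factor $2$.

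Two further edge cases need checking. If $\delta_t=0$ (that is, $t\in\boundary S_t$), Step 1 still holds trivially. If $S_t=T$, then $\boundary S_t=\emptyset$ and $\widehat p(T)=1>p_t/2$, so this case cannot occur, and Step 1 is consistent.
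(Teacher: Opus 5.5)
Your proof is correct under the shifted-logarithm convention you adopt, but it is organized differently from the paper's.

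\textbf{How the paper argues.} The paper never forms the linear bound $\sum_{t\in L}p_t\delta_t\le 2\eta$. It applies $\log$ pair by pair inside the coupling, asserting $\sum_{y}\pi^*(t,y)\log\dist_T(t,y)\ge \frac{p_t}{2}\log\delta_t$ for each $t\in L$. It then enlarges the sum to all rows $x\in V$ and applies Jensen once to the whole coupling. Because $\pi^*$ has total mass $1$, this gives $2k\log\eta$ directly, with no normalization loss. The price is that both the per-row inequality and the enlargement to all rows silently need $\log\dist_T(x,y)\ge 0$ on the support of $\pi^*$. This fails on the diagonal, where $\dist_T(x,x)=0$, so read literally the paper needs a nonnegative concave surrogate such as $\log(1+x)$.

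\textbf{How your route differs.} Your Step~1 is purely linear, so no sign issues arise. You pay instead a $P\log(1/P)$ term from applying Jensen only over $L$ with total weight $P=p(L)$. That term is what forces $\eta\ge 2$ or the shift.

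\textbf{Two small corrections.} First, the estimate $\log 2+\log\eta+1/e$ only yields $2\log\eta$ for $\eta\ge 2e^{1/e}\approx 2.9$. To get the threshold $\eta\ge 2$, note that for $\eta\ge e/2$ the map $P\mapsto P\log(2\eta/P)$ is increasing on $(0,1]$. The bound is therefore $\log(2\eta)$, which is at most $2\log\eta$ exactly when $\eta\ge 2$. Second, $\max(1,\log x)$ is not concave, so only the $\log(x+2)$ convention lets Jensen go through unchanged. With it you get $\sum_{t\in L}p_t\log(\delta_t+2)\le P\log(2\eta/P+2)\le \log(2\eta+2)\le 2\log(\eta+2)$ for every $\eta\ge 0$.

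Your observation that the literal statement fails for small $\eta$ is right, and the paper does not address it. It already fails at $\eta=0$: there $p=\widehat p$ forces $L=\emptyset$, so the left-hand side is $0$ while the right-hand side is $-\infty$.
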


\begin{proof}
Let $\pi^*$ be an optimal coupling for $\Wone(p,\widehat p)$. For a target
$t\in L$, the amount of row-$t$ mass that can be transported to vertices
inside $S_t$ is at most the total destination mass of $S_t$:
$
    \sum_{y\in S_t}\pi^*(t,y)
    \le
    \sum_{x\in V}\sum_{y\in S_t}\pi^*(x,y)
    =
    \widehat p(S_t)
    \le
    \frac{p_t}{2}.
$
Thus at least $p_t/2$ units of mass from $t$ are transported outside $S_t$.
Every path from $t$ to a vertex outside $S_t$ crosses $\boundary S_t$, and
hence every such destination has distance at least $\delta_t$ from $t$.
Since the logarithm is increasing,
$
    \sum_{y\in V}
    \pi^*(t,y)k\log\left(\dist_T(t,y)\right)
    \ge
    \frac{p_t}{2}k\log\delta_t.
$
Summing over $t\in L$ and using Jensen's inequality gives
\begin{align*}
    \sum_{t\in L}p_t k\log\delta_t
    &\le
    2\sum_{x,y\in V}
    \pi^*(x,y)k\log\left(\dist_T(x,y)\right)\\
    &\le
    2k\log\left(
       \sum_{x,y\in V}\pi^*(x,y)\dist_T(x,y)
    \right)\\
    &=
    2k\log\eta.
\end{align*}
The sets $S_t$ may overlap; this causes no double counting because each
target is charged only through its own source row in the coupling.
\end{proof}

We are now finally ready to prove our main theorem. 

\DistributionalUpperBound*

\begin{proof}
Partition the targets into the two cases of \Cref{lem:dichotomy}. For targets
of type~(i), their contribution to the expected number of queries is
$
    O\left(
        \sum_{t\in V}p_t\log\frac{4}{p_t}
    \right)
    =
    O(H(p)).
$

For targets of type~(ii),
\Cref{lem:dichotomy,lem:transport-charge} show that their contribution to the
expected number of queries is
$
    O\left(
        \sum_{t\text{ of type (ii)}}p_t k\log\delta_t
    \right)
    =
    O\!\left(k\log\eta\right).
$
Adding the two bounds shows that the algorithm finds a target drawn from $p$
using
$
    O\!\left(
        H(p)+k\log\eta
    \right)
$
queries in expectation.
\end{proof}

\section{Lower Bound}

In this section, we show that both terms in the upper bound from
Theorem~\ref{DistributionalUpperBound} are independently necessary.
The entropy lower bound follows from classical binary search with a
known target distribution. For the prediction-error term, we reduce
the point-prediction problem to the distributional setting and apply
the lower bound of~\citet{dinitz2026treesearchpredictions}. This reduction
yields hard instances whose true distributions have zero entropy.
\subsection{An Entropy Lower Bound}

We first observe that the $\Omega(H(p))$ term follows immediately from the classical lower bound for binary search. Indeed, a path is a special case of a tree, and our query model on a path is precisely comparison-based binary search. Mehlhorn~\citep{Mehlhorn1975} showed that, even when the access distribution $p$ is known, every binary search strategy has expected query complexity at least $\Omega(H(p))$. Taking $\widehat p=p$, this immediately gives an $\Omega(H(p))$ lower bound for Distributional Search on Trees, even when the predicted distribution is perfectly accurate. It therefore remains to show that the dependence on the prediction error is also necessary, even when $H(p)=0$.

\subsection{A Prediction-Error Lower Bound}

We now show that the dependence on prediction error and pathwidth is
necessary even when $H(p)=0$.

\DistributionallowerBound*

\begin{proof}
Fix $k\ge 1$ and an integer $\ell\ge\max\{4,k^2\}$. We use the
construction of~\citet{dinitz2026treesearchpredictions}, which provides
a tree $T_{k,\ell}$ of pathwidth $k$ and maximum degree at most $3$,
a fixed predicted vertex $s$, and a set $B_{k,\ell}$ of admissible
targets. Every $t\in B_{k,\ell}$ satisfies
$k\ell/2\le\dist(s,t)\le k\ell$. Let $\mu$ be the uniform distribution
over $B_{k,\ell}$. Their decision-tree lower bound shows that every
deterministic search algorithm $C$ that correctly identifies every
admissible target satisfies
$\mathbb{E}_{t\sim\mu}[Q_C(t)]\ge c k\log\ell$,
where $Q_C(t)$ is its number of oracle queries on target $t$, and
$c>0$ is an absolute constant.

For each $t\in B_{k,\ell}$, define a separate distributional-search
instance $\mathcal{I}_t$ on $T_{k,\ell}$. Its true distribution
$p^{(t)}=\delta_t$ places unit mass at $t$, and its predicted
distribution $\widehat p=\delta_s$ places unit mass at $s$.
Let $\nu$ be the distribution over these instances obtained by
drawing $t\sim\mu$ and selecting $\mathcal{I}_t$.
Thus, $\nu$ determines which instance is selected, while
$p^{(t)}$ determines the target within the selected instance.
Since the latter distribution is concentrated at a single vertex,
every instance satisfies $H(p^{(t)})=0$ and
$\eta_t:=\EMD(p^{(t)},\widehat p)=\dist(s,t)$.

Let $A$ be any deterministic or zero-error randomized algorithm for
Distributional Search on Trees. We obtain a point-prediction
algorithm $A'$ by giving $A$ the tree $T_{k,\ell}$ and predicted
distribution $\delta_s$, and forwarding each query to the
point-search oracle. This simulation does not require knowledge
of $t$ or $p^{(t)}$, because the true distribution is not an input
to $A$. For every target $t$ and every fixed choice $\rho$ of
the algorithm's random bits, the simulation produces exactly the
same queries and oracle responses as running $A$ on
$\mathcal{I}_t$. In particular,
$Q_{A'}(t;\rho)=Q_A(\mathcal{I}_t;\rho)$.

For a deterministic algorithm, the cited decision-tree lower bound
therefore gives
$\mathbb{E}_{t\sim\mu}[Q_A(\mathcal{I}_t)]\ge c k\log\ell$.
For a zero-error randomized algorithm, the finite number of admissible
targets ensures that, for almost every fixed choice of $\rho$,
the simulated algorithm is correct on every target in $B_{k,\ell}$.
Applying the same deterministic lower bound and then averaging over
$\rho$ gives
$\mathbb{E}_{t\sim\mu}\mathbb{E}_{\rho}
[Q_A(\mathcal{I}_t;\rho)]\ge c k\log\ell$.
Equivalently, this is a lower bound on the expected query cost when
the instance is drawn from $\nu$.

By averaging, there is some $t\in B_{k,\ell}$ for which
$\mathbb{E}_{\rho}[Q_A(\mathcal{I}_t;\rho)]\ge c k\log\ell$.
Since $\ell\ge k^2$, we have
$\eta_t=\dist(s,t)\le k\ell\le\ell^{3/2}$, and hence
$k\log\ell\ge\frac{2}{3}k\log\eta_t$.
Consequently, $A$ requires $\Omega(k\log\eta_t)$ queries in
expectation on $\mathcal{I}_t$, even though $H(p^{(t)})=0$.
The deterministic case follows with the expectation over $\rho$
omitted.

Allowing $\ell$ to range over all integers at least
$\max\{4,k^2\}$ gives the required infinite family.
Together with the entropy lower bound for exact predictions, this
establishes that both terms in the upper bound are independently
necessary.
\end{proof}

\section{Experimental Setup}
\label{experiment}

We evaluate how errors in a predicted target distribution affect the
query complexity of tree search. Our experiments use four trees:
two spanning trees derived from road networks and two artificial trees
with prescribed pathwidth. On each tree, we gradually replace the
support of the predicted distribution while keeping the true
distribution fixed. This provides a controlled comparison across five
levels of support mismatch and allows us to examine both the overhead
of our algorithm under accurate predictions and its behavior as the
prediction becomes less accurate.

\paragraph{Datasets and tree construction.}
The two road networks are the Luxembourg road network and the US road
network covering the 48 contiguous states, obtained from \cite{nr}.
We interpret each input as an undirected, unweighted simple graph,
removing self-loops and duplicate edges. We extract its largest
connected component and construct a BFS spanning tree, starting from
the vertex with the smallest original identifier and visiting neighbors
in increasing identifier order. This makes the tree construction
deterministic. All resulting tree edges have unit length.

We also construct two artificial trees with exact pathwidths 2 and 3 using the hard instance construction by \cite{dinitz2026treesearchpredictions}.

Table~\ref{tab:tree-statistics} summarizes the four experimental trees.
The road trees provide examples derived from network data, while the
artificial trees allow us to evaluate the algorithms on larger trees
with explicitly controlled pathwidth.

\begin{table}[t]
    \centering
    \small
    \setlength{\tabcolsep}{4pt}
    \begin{tabular}{lrrrrr}
        \toprule
        Tree & Vertices & Edges & Diameter & Pathwidth & Max.\ degree \\
        \midrule
        Luxembourg road
            & 114,599 & 114,598 & 1,985 & 6 & 5 \\
        US road
            & 126,146 & 126,145 & 936 & 7 & 6 \\
        Artificial, pathwidth 2
            & 1,048,576 & 1,048,575 & 3,069 & 2 & 3 \\
        Artificial, pathwidth 3
            & 1,050,504 & 1,050,503 & 503 & 3 & 3 \\
        \bottomrule
    \end{tabular}
    \caption{Properties of the four experimental trees.
    All edges have unit length, diameters are measured in edges,
    and all reported pathwidth values are exact.}
    \label{tab:tree-statistics}
\end{table}

\paragraph{True and predicted distributions.}
For a tree with $n$ vertices, we sample $\lfloor n/2\rfloor$
vertices uniformly without replacement and define the true
distribution to be uniform on this support. Within each trial,
we keep the true distribution fixed and construct predicted
distributions by progressively replacing support vertices with
randomly sampled vertices outside the true support. Each predicted
distribution is uniform on its support, whose size remains unchanged.
This produces predictions ranging from an exact match to distributions
with disjoint supports.

This experiment models a distribution shift between predicted and
actual target locations. A prediction may correctly identify some
likely locations while assigning probability to others that do not
match the true distribution. Varying the overlap between the supports
provides a controlled way to adjust this mismatch and examine how
search performance changes as the prediction becomes less informative.
Keeping the support size and uniform weights fixed ensures that
the comparison reflects changes in the predicted locations rather
than changes in how concentrated the distributions are.

\paragraph{Algorithms and query accounting.}
We compare our algorithm with two versions of weighted centroid
search, using either the true distribution or the predicted
distribution.

The true-distribution baseline uses $p$ as the vertex weights.
At each step, it queries a weighted centroid of the current feasible
subtree, so that every component remaining after removing the queried
vertex contains at most half of the remaining probability mass.
Unless the queried vertex is the target, the oracle response
identifies the component in which the search continues.
Because this baseline knows the true target distribution, it serves
as an informed benchmark for evaluating the cost of using an
imperfect prediction.

The predicted-distribution baseline initially follows the same
procedure using the predicted weights $\widehat p$. While the feasible
subtree contains positive predicted mass, it queries a weighted
centroid and restricts the search according to the oracle response.
However, an inaccurate prediction may assign zero probability to
the entire remaining feasible subtree. If this happens before the
target is resolved, the baseline stops using the prediction and
restarts ordinary centroid search from the remaining feasible subtree tree.
In this second phase, vertices receive equal weight, and each
centroid query reduces the number of remaining candidate vertices.

For all methods, each direction-oracle call costs one query, and
a singleton feasible subtree is resolved without an additional
confirmation query. Appendix~\ref{experimental results} records additional experiments details and reports the
experimental results.

\newpage
\subsection*{AI use statement}
No artificial intelligence tools were used in the preparation of this work. All ideas, proofs, and written content were developed and written independently by the authors.

\bibliography{sources}

@article{BENTLEY197682,
title = {An almost optimal algorithm for unbounded searching},
journal = {Information Processing Letters},
volume = {5},
number = {3},
pages = {82-87},
year = {1976},
issn = {0020-0190},
doi = {https://doi.org/10.1016/0020-0190(76)90071-5},
url = {https://www.sciencedirect.com/science/article/pii/0020019076900715},
author = {Jon Louis Bentley and Andrew Chi-Chih Yao}
}

@inproceedings{nr,
      title = {The Network Data Repository with Interactive Graph Analytics and Visualization},
      author={Ryan A. Rossi and Nesreen K. Ahmed},
      booktitle = {AAAI},
      url={https://networkrepository.com},
      year={2015}
  }

@inbook{MitzenmacherVassilvitskii, 
place={Cambridge}, 
title={Algorithms with Predictions}, 
DOI={10.1017/9781108637435.037}, 
booktitle={Beyond the Worst-Case Analysis of Algorithms}, 
publisher={Cambridge University Press}, 
author={Mitzenmacher, Michael and Vassilvitskii, Sergei}, 
editor={Roughgarden, Tim}, 
year={2021}, 
pages={646–662}}

@inproceedings{DILMNV24,
  author       = {Michael Dinitz and
                  Sungjin Im and
                  Thomas Lavastida and
                  Benjamin Moseley and
                  Aidin Niaparast and
                  Sergei Vassilvitskii},
  editor       = {Amir Globersons and
                  Lester Mackey and
                  Danielle Belgrave and
                  Angela Fan and
                  Ulrich Paquet and
                  Jakub M. Tomczak and
                  Cheng Zhang},
  title        = {Binary Search with Distributional Predictions},
  booktitle    = {Advances in Neural Information Processing Systems 38: Annual Conference
                  on Neural Information Processing Systems 2024, NeurIPS 2024, Vancouver,
                  BC, Canada, December 10 - 15, 2024},
  year         = {2024},
  url          = {http://papers.nips.cc/paper\_files/paper/2024/hash/a4b293979b8b521e9222d30c40246911-Abstract-Conference.html},
  bibsource    = {dblp computer science bibliography, https://dblp.org}
}

@article{lykouris2021competitive,
  title={Competitive caching with machine learned advice},
  author={Lykouris, Thodoris and Vassilvitskii, Sergei},
  journal={Journal of the ACM (JACM)},
  volume={68},
  number={4},
  pages={1--25},
  year={2021},
  publisher={ACM New York, NY}
}

@inproceedings{BrandFNP24,
  author       = {Jan van den Brand and
  Sebastian Forster and
  Yasamin Nazari and
  Adam Polak},
  title        = {On Dynamic Graph Algorithms with Predictions},
  booktitle    = {{SODA}},
  pages        = {3534--3557},
  publisher    = {{SIAM}},
  year         = {2024}
}

@article{AgrawalBGOT24,
  author       = {Priyank Agrawal and
                  Eric Balkanski and
                  Vasilis Gkatzelis and
                  Tingting Ou and
                  Xizhi Tan},
  title        = {Learning-Augmented Mechanism Design: Leveraging Predictions for Facility
                  Location},
  journal      = {Math. Oper. Res.},
  volume       = {49},
  number       = {4},
  pages        = {2626--2651},
  year         = {2024},
  url          = {https://doi.org/10.1287/moor.2022.0225},
  doi          = {10.1287/MOOR.2022.0225},
  bibsource    = {dblp computer science bibliography, https://dblp.org}
}

@inproceedings{Purohit,
  title={Improving online algorithms via ml predictions},
  author={Purohit, Manish and Svitkina, Zoya and Kumar, Ravi},
  booktitle={Advances in Neural Information Processing Systems},
  pages={9661--9670},
  year={2018}
}

@inproceedings{LattanziLMV,
  author    = {Silvio Lattanzi and
               Thomas Lavastida and
               Benjamin Moseley and
               Sergei Vassilvitskii},
  editor    = {Shuchi Chawla},
  title     = {Online Scheduling via Learned Weights},
  booktitle = {Proceedings of the 2020 {ACM-SIAM} Symposium on Discrete Algorithms,
               {SODA} 2020, Salt Lake City, UT, USA, January 5-8, 2020},
  pages     = {1859--1877},
  publisher = {{SIAM}},
  year      = {2020},
  url       = {https://doi.org/10.1137/1.9781611975994.114},
  doi       = {10.1137/1.9781611975994.114},
  bibsource = {dblp computer science bibliography, https://dblp.org}
}

@inproceedings{ImKQP21,
  author       = {Sungjin Im and
                  Ravi Kumar and
                  Mahshid Montazer Qaem and
                  Manish Purohit},
  editor       = {Marc'Aurelio Ranzato and
                  Alina Beygelzimer and
                  Yann N. Dauphin and
                  Percy Liang and
                  Jennifer Wortman Vaughan},
  title        = {Online Knapsack with Frequency Predictions},
  booktitle    = {Advances in Neural Information Processing Systems 34: Annual Conference
                  on Neural Information Processing Systems 2021, NeurIPS 2021, December
                  6-14, 2021, virtual},
  pages        = {2733--2743},
  year         = {2021},
  url          = {https://proceedings.neurips.cc/paper/2021/hash/161c5c5ad51fcc884157890511b3c8b0-Abstract.html},
  bibsource    = {dblp computer science bibliography, https://dblp.org}
}

@inproceedings{BamasMS20,
  author    = {{\'{E}}tienne Bamas and
               Andreas Maggiori and
               Ola Svensson},
  editor    = {Hugo Larochelle and
               Marc'Aurelio Ranzato and
               Raia Hadsell and
               Maria{-}Florina Balcan and
               Hsuan{-}Tien Lin},
  title     = {The Primal-Dual method for Learning Augmented Algorithms},
  booktitle = {Advances in Neural Information Processing Systems 33: Annual Conference
               on Neural Information Processing Systems 2020, NeurIPS 2020, December
               6-12, 2020, virtual},
  year      = {2020},
  url       = {https://proceedings.neurips.cc/paper/2020/hash/e834cb114d33f729dbc9c7fb0c6bb607-Abstract.html},
  bibsource = {dblp computer science bibliography, https://dblp.org}
}

@inproceedings{DinitzILMV21,
  author    = {Michael Dinitz and
               Sungjin Im and
               Thomas Lavastida and
               Benjamin Moseley and
               Sergei Vassilvitskii},
  editor    = {Marc'Aurelio Ranzato and
               Alina Beygelzimer and
               Yann N. Dauphin and
               Percy Liang and
               Jennifer Wortman Vaughan},
  title     = {Faster Matchings via Learned Duals},
  booktitle = {Advances in Neural Information Processing Systems 34: Annual Conference
               on Neural Information Processing Systems 2021, NeurIPS 2021, December
               6-14, 2021, virtual},
  pages     = {10393--10406},
  year      = {2021},
  url       = {https://proceedings.neurips.cc/paper/2021/hash/5616060fb8ae85d93f334e7267307664-Abstract.html},
  bibsource = {dblp computer science bibliography, https://dblp.org}
}

@InProceedings{BGKK23,
  author =	{Berendsohn, Benjamin Aram and Golinsky, Ishay and Kaplan, Haim and Kozma, L\'{a}szl\'{o}},
  title =	{{Fast Approximation of Search Trees on Trees with Centroid Trees}},
  booktitle =	{50th International Colloquium on Automata, Languages, and Programming (ICALP 2023)},
  pages =	{19:1--19:20},
  series =	{Leibniz International Proceedings in Informatics (LIPIcs)},
  ISBN =	{978-3-95977-278-5},
  ISSN =	{1868-8969},
  year =	{2023},
  volume =	{261},
  editor =	{Etessami, Kousha and Feige, Uriel and Puppis, Gabriele},
  publisher =	{Schloss Dagstuhl -- Leibniz-Zentrum f{\"u}r Informatik},
  address =	{Dagstuhl, Germany},
  URL =		{https://drops.dagstuhl.de/entities/document/10.4230/LIPIcs.ICALP.2023.19},
  URN =		{urn:nbn:de:0030-drops-180711},
  doi =		{10.4230/LIPIcs.ICALP.2023.19}
}

@INPROCEEDINGS{OP06,
  author={Onak, Krzysztof and Parys, Pawel},
  booktitle={2006 47th Annual IEEE Symposium on Foundations of Computer Science (FOCS'06)}, 
  title={Generalization of Binary Search: Searching in Trees and Forest-Like Partial Orders}, 
  year={2006},
  volume={},
  number={},
  pages={379-388},
  doi={10.1109/FOCS.2006.32}}

@inproceedings{EKS16,
author = {Emamjomeh-Zadeh, Ehsan and Kempe, David and Singhal, Vikrant},
title = {Deterministic and probabilistic binary search in graphs},
year = {2016},
isbn = {9781450341325},
publisher = {Association for Computing Machinery},
address = {New York, NY, USA},
url = {https://doi.org/10.1145/2897518.2897656},
doi = {10.1145/2897518.2897656},
booktitle = {Proceedings of the Forty-Eighth Annual ACM Symposium on Theory of Computing},
pages = {519–532},
numpages = {14},
location = {Cambridge, MA, USA},
series = {STOC '16}
}

@article{Mehlhorn1975,
  author  = {Kurt Mehlhorn},
  title   = {Nearly Optimal Binary Search Trees},
  journal = {Acta Informatica},
  volume  = {5},
  pages   = {287--295},
  year    = {1975},
  doi     = {10.1007/BF00264563}
}

@misc{dinitz2026treesearchpredictions,
      title={Tree Search With Predictions}, 
      author={Michael Dinitz and Bob Dong},
      year={2026},
      eprint={2605.27490},
      archivePrefix={arXiv},
      primaryClass={cs.DS},
      url={https://arxiv.org/abs/2605.27490}, 
}

@inbook{doi:10.1137/1.9781611978971.47,
author = {Anupam Gupta and Amit Kumar and Debmalya Panigrahi and Zhaozi Wang},
title = {An Optimal Online Algorithm for Robust Flow Time Scheduling},
booktitle = {Proceedings of the 2026 Annual ACM-SIAM Symposium on Discrete Algorithms (SODA)},
chapter = {},
pages = {1214-1238},
doi = {10.1137/1.9781611978971.47},
URL = {https://epubs.siam.org/doi/abs/10.1137/1.9781611978971.47},
eprint = {https://epubs.siam.org/doi/pdf/10.1137/1.9781611978971.47}
}

@InProceedings{cabello_et_al:LIPIcs.ITCS.2026.31,
  author =	{Cabello, Sergio and Chan, Timothy M. and Giannopoulos, Panos},
  title =	{{Delaunay Triangulations with Predictions}},
  booktitle =	{17th Innovations in Theoretical Computer Science Conference (ITCS 2026)},
  pages =	{31:1--31:23},
  series =	{Leibniz International Proceedings in Informatics (LIPIcs)},
  ISBN =	{978-3-95977-410-9},
  ISSN =	{1868-8969},
  year =	{2026},
  volume =	{362},
  editor =	{Saraf, Shubhangi},
  publisher =	{Schloss Dagstuhl -- Leibniz-Zentrum f{\"u}r Informatik},
  address =	{Dagstuhl, Germany},
  URL =		{https://drops.dagstuhl.de/entities/document/10.4230/LIPIcs.ITCS.2026.31},
  URN =		{urn:nbn:de:0030-drops-253186},
  doi =		{10.4230/LIPIcs.ITCS.2026.31}
}

@InProceedings{10.1007/978-3-031-81396-2_3,
author="Cabello, Sergio
and Giannopoulos, Panos",
editor="Bie{\'{n}}kowski, Marcin
and Englert, Matthias",
title="Searching in Euclidean Spaces with Predictions",
booktitle="Approximation and Online Algorithms",
year="2025",
publisher="Springer Nature Switzerland",
address="Cham",
pages="31--45",
isbn="978-3-031-81396-2"
}

@InProceedings{pmlr-v267-daneshvaramoli25a,
  title = 	 {Near-Optimal Consistency-Robustness Trade-Offs for Learning-Augmented Online Knapsack Problems},
  author =       {Daneshvaramoli, Mohammadreza and Karisani, Helia and Lechowicz, Adam and Sun, Bo and Musco, Cameron N and Hajiesmaili, Mohammad},
  booktitle = 	 {Proceedings of the 42nd International Conference on Machine Learning},
  pages = 	 {12459--12489},
  year = 	 {2025},
  editor = 	 {Singh, Aarti and Fazel, Maryam and Hsu, Daniel and Lacoste-Julien, Simon and Berkenkamp, Felix and Maharaj, Tegan and Wagstaff, Kiri and Zhu, Jerry},
  volume = 	 {267},
  series = 	 {Proceedings of Machine Learning Research},
  month = 	 {13--19 Jul},
  publisher =    {PMLR},
  url = 	 {https://proceedings.mlr.press/v267/daneshvaramoli25a.html}
}

@inproceedings{
lechowicz2024time,
title={Time Fairness in Online Knapsack Problems},
author={Adam Lechowicz and Rik Sengupta and Bo Sun and Shahin Kamali and Mohammad Hajiesmaili},
booktitle={The Twelfth International Conference on Learning Representations},
year={2024},
url={https://openreview.net/forum?id=9kG7TwgLYu}
}

@misc{cui2026skirentaldistributionalpredictions,
      title={Ski Rental with Distributional Predictions of Unknown Quality}, 
      author={Qiming Cui and Michael Dinitz},
      year={2026},
      eprint={2602.21104},
      archivePrefix={arXiv},
      primaryClass={cs.LG},
      url={https://arxiv.org/abs/2602.21104}, 
}

@misc{angelopoulos2024contractschedulingdistributionalmultiple,
      title={Contract Scheduling with Distributional and Multiple Advice}, 
      author={Spyros Angelopoulos and Marcin Bienkowski and Christoph Dürr and Bertrand Simon},
      year={2024},
      eprint={2404.12485},
      archivePrefix={arXiv},
      primaryClass={cs.DS},
      url={https://arxiv.org/abs/2404.12485}, 
}

@misc{kim2026robustconsistentskirental,
      title={Robust and Consistent Ski Rental with Distributional Advice}, 
      author={Jihwan Kim and Chenglin Fan},
      year={2026},
      eprint={2603.29233},
      archivePrefix={arXiv},
      primaryClass={cs.LG},
      url={https://arxiv.org/abs/2603.29233}, 
}

@misc{kang2025learningaugmentedskirentaldiscrete,
      title={Learning-Augmented Ski Rental with Discrete Distributions: A Bayesian Approach}, 
      author={Bosun Kang and Hyejun Park and Chenglin Fan},
      year={2025},
      eprint={2512.07313},
      archivePrefix={arXiv},
      primaryClass={cs.LG},
      url={https://arxiv.org/abs/2512.07313}, 
}
\bibliographystyle{plainnat}

\newpage
\appendix
\section{Experimental Results} \label{experimental results}
\paragraph{Trials and evaluation.}
We conduct ten independent trials on each tree. In each trial,
we sample 50 targets uniformly without replacement from the true
support $A$ and use the same targets for all three methods and
all five replacement levels. We report the average number of
queries across the sampled targets and trials, giving 500 target
evaluations per plotted point. These averages estimate expected
query complexity under the true distribution. All trials and
outcomes are included.

\begin{figure}[h]
    \centering
    \includegraphics[width=0.48\linewidth]{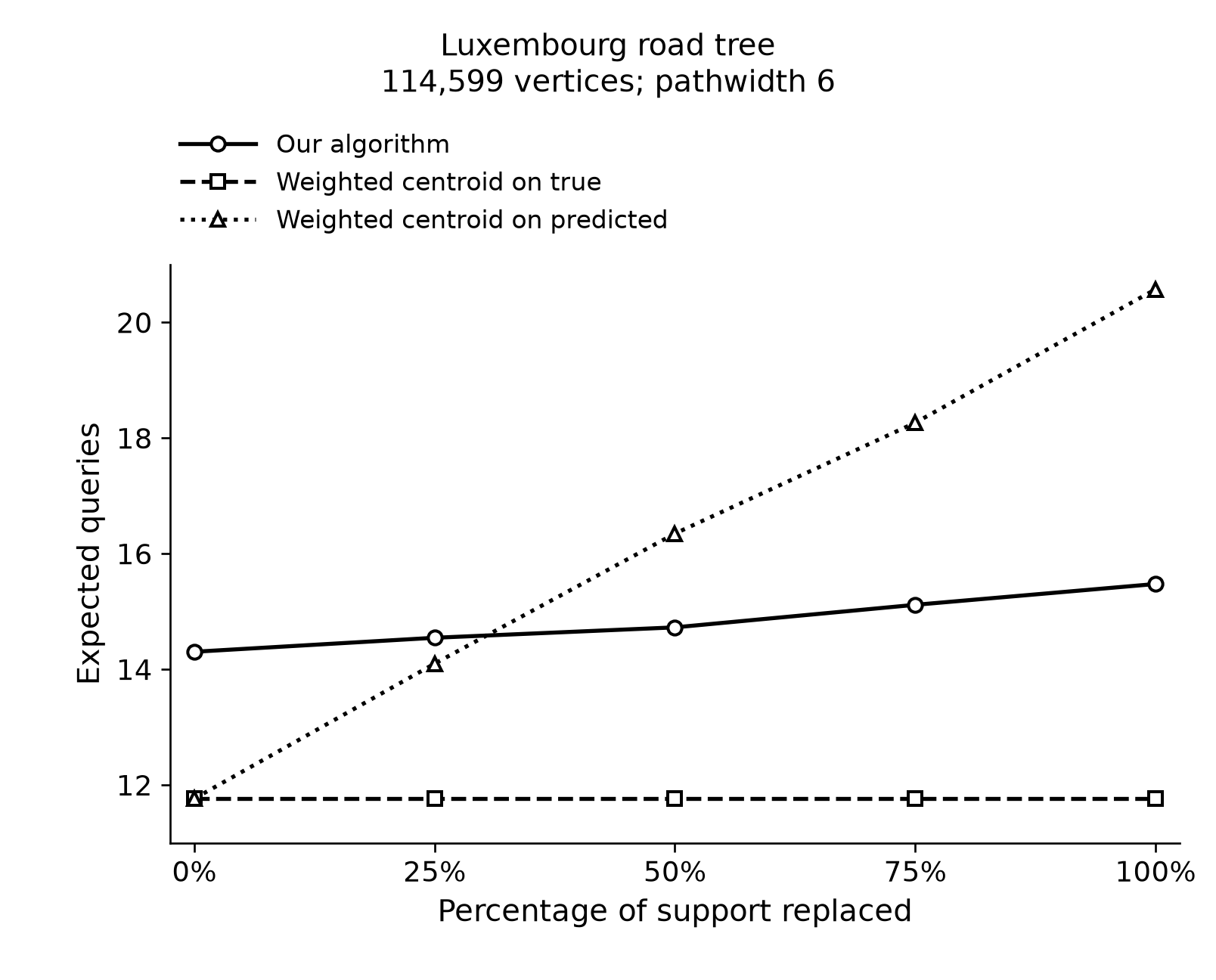}\hfill
    \includegraphics[width=0.48\linewidth]{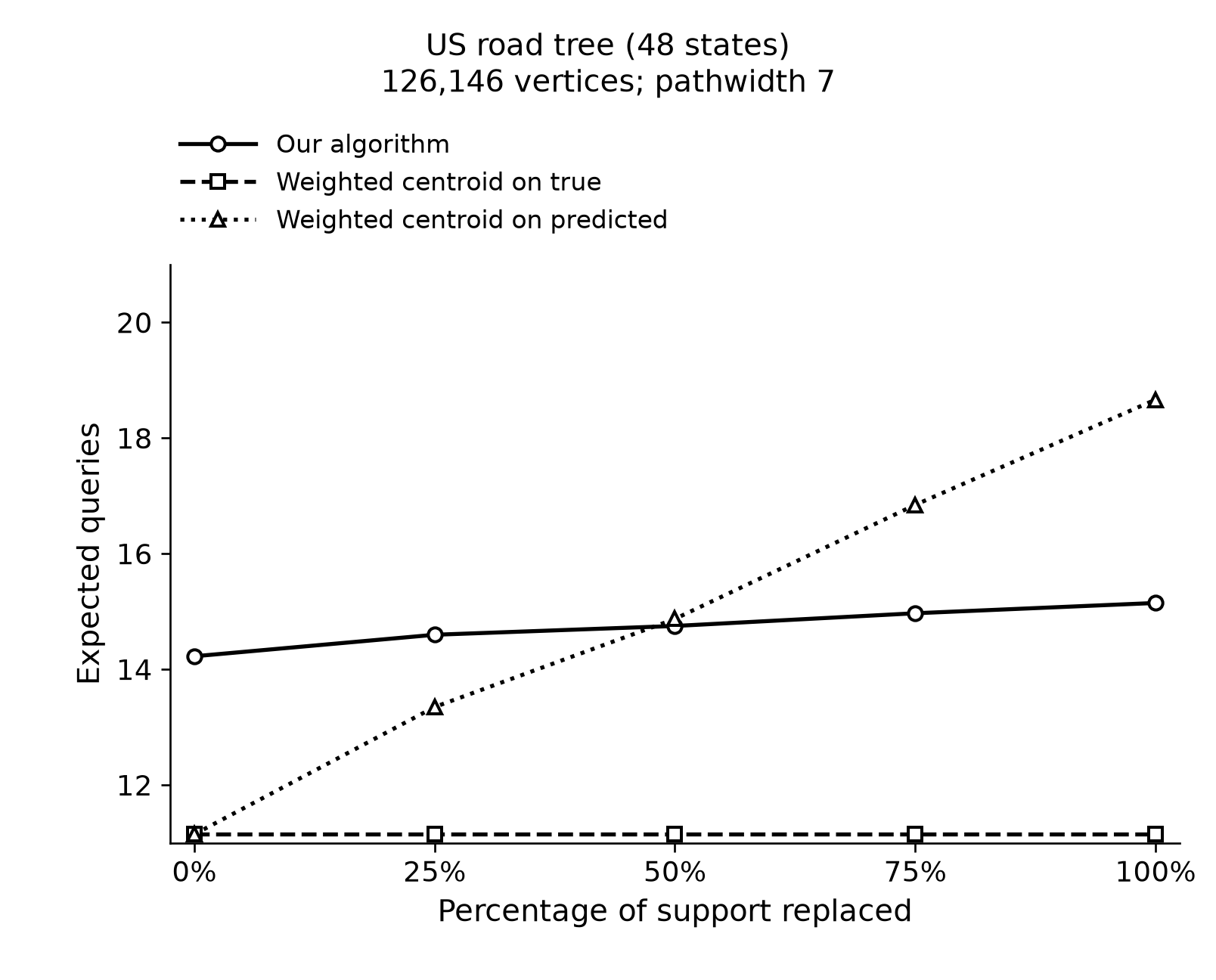}\\[2mm]
    \includegraphics[width=0.48\linewidth]{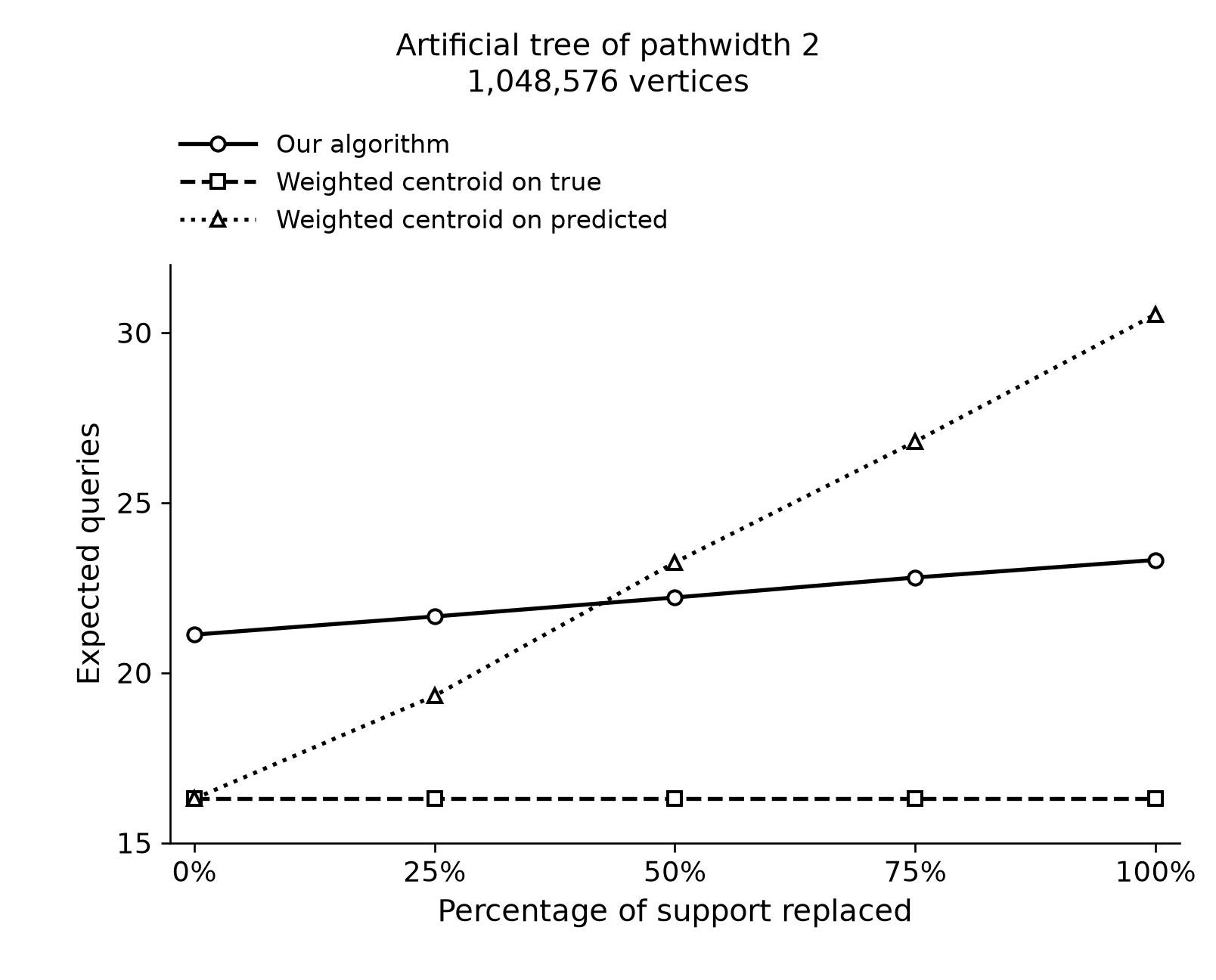}\hfill
    \includegraphics[width=0.48\linewidth]{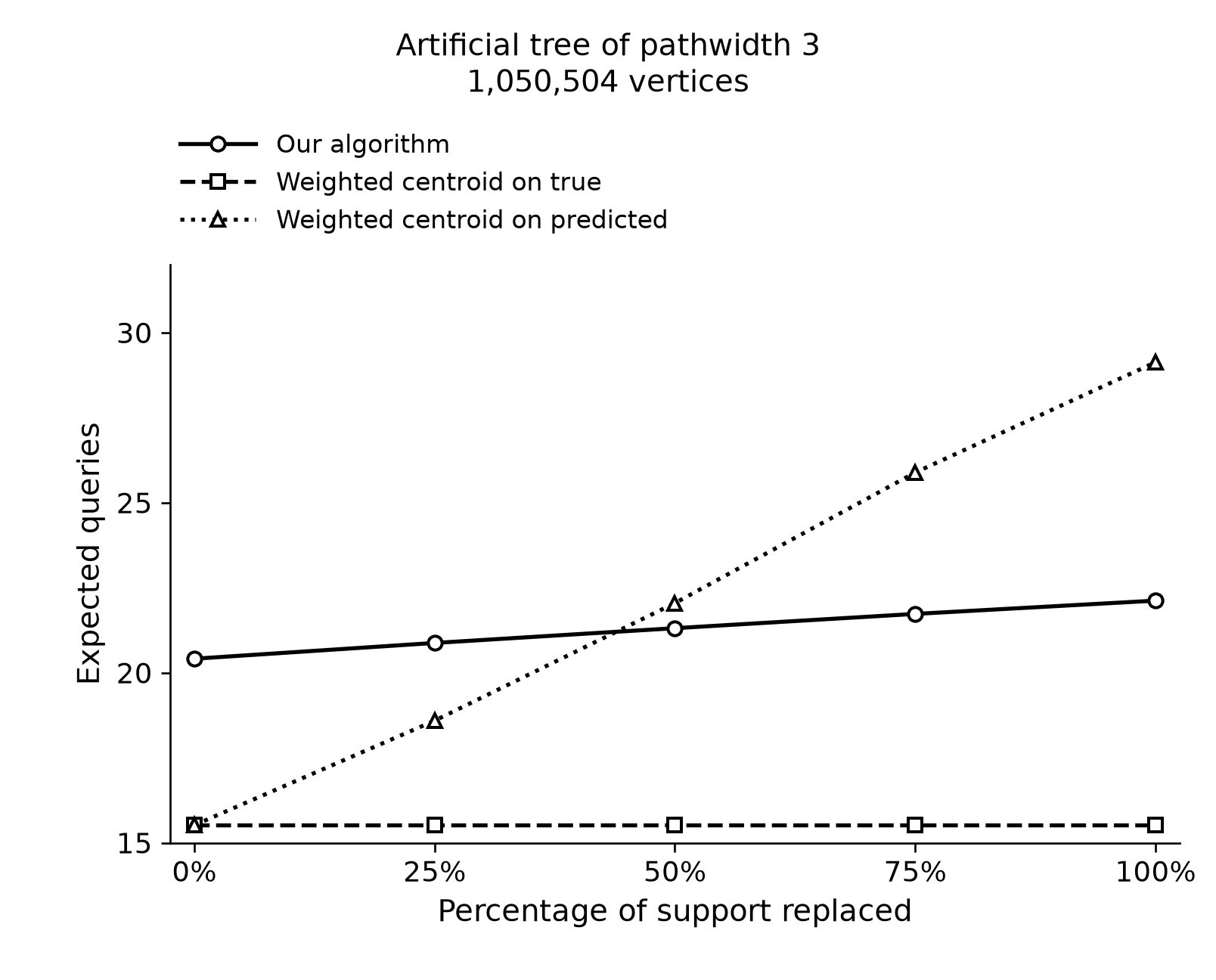}
    \caption{Expected query complexity versus percentage
    of support replaced. The top row shows the Luxembourg and US
    road trees; the bottom row shows the artificial trees of
    pathwidth 2 and 3.}
    \label{fig:tree-experiments}
\end{figure}

\paragraph{Performance under accurate predictions.}
Figure~\ref{fig:tree-experiments} shows a consistent tradeoff across
the four trees. With no replacement, the predicted and true
distributions coincide, so the two weighted-centroid baselines have
identical query counts. Our algorithm incurs additional cost from
its recovery procedures. On the Luxembourg and US road trees, its
mean query counts are $14.30$ and $14.23$, compared with $11.76$ and
$11.15$ for weighted centroid. On the artificial trees of pathwidth
2 and 3, the corresponding values are $21.12$ and $20.41$, compared
with $16.30$ and $15.52$. Across the four trees, this is an overhead
of approximately $22\%$--$32\%$ under perfect predictions.

The predicted-centroid baseline also has a lower mean query count
at $25\%$ replacement on every tree. These observations show that
our algorithm does not uniformly improve on prediction-guided
centroid search: its additional search procedures have a measurable
cost when predictions are accurate or only mildly perturbed.

\paragraph{Performance as support mismatch increases.}
As the replacement fraction increases, the mean query count of
the predicted-centroid baseline rises more sharply than that of
our algorithm. At the sampled levels of $50\%$, $75\%$, and $100\%$
replacement, our algorithm has a lower mean query count on all four
trees. The difference at $50\%$ replacement is small for the US
road tree, where the means are $14.75$ and $14.87$, so this point
should be interpreted as a close comparison.

At complete replacement, our algorithm uses $15.48$ queries on
average on the Luxembourg road tree and $15.15$ on the US road
tree, compared with $20.57$ and $18.66$ for the predicted-centroid
baseline. On the artificial trees of pathwidth 2 and 3, it uses
$23.32$ and $22.12$ queries, compared with $30.54$ and $29.13$.
These correspond to reductions of approximately $19\%$--$25\%$
relative to the restarting predicted-centroid baseline.

Over the full replacement range, our algorithm's mean query cost
increases by approximately $6.5\%$--$10.4\%$, whereas the
predicted-centroid baseline increases by approximately
$67.5\%$--$87.7\%$. The true-distribution reference remains constant
because its input distribution and evaluated targets do not change
within a trial.

Across both road networks and million-vertex artificial trees,
our algorithm maintains stable query costs as prediction error
increases. At complete support replacement, it uses approximately
$19\%$--$25\%$ fewer queries than the predicted-centroid
baseline. This consistent performance across all four trees
highlights the strength of our approach: substantial mismatch
between the predicted and true distributions leads to only a
limited increase in search cost.

\end{document}